\newcommand{\wordle}{\textsf{Wordle}}
\newcommand{\np}{\mathrm{NP}}
\newcommand{\fpt}{\mathrm{FPT}}
\newcommand{\W}{\mathrm{W}[2]}
\newtheorem{theorem}{Theorem}
\newtheorem{lemma}{Lemma}
\newtheorem{corollary}{Corollary}
\definecolor{wgray}{RGB}{120,124,126}
\definecolor{wgreen}{RGB}{106,170,100}
\definecolor{wyellow}{RGB}{201,180,88}
\DeclareRobustCommand{\wordlify}[2]{
	{\protect
	\hspace{-15pt}
	\raisebox{-5pt}{
		\foreach \l [count=\c, evaluate=\c as \co using {#2[\c-1]}] in #1 {
			\ifnum \co=0
				\SquareBox{\l}{wgray}{3pt}
			\else
				\ifnum \co=1
					\SquareBox{\l}{wyellow}{3pt}%
				\else
					\SquareBox{\l}{wgreen}{3pt}%
				\fi
			\fi
			\hspace{-4pt}
		}
	}
	}
}
\newdimen\@myBoxHeight%
\newdimen\@myBoxDepth%
\newdimen\@myBoxWidth%
\newdimen\@myBoxSize%
\DeclareRobustCommand{\SquareBox}[3]{%
	 \settoheight{\@myBoxHeight}{#3}
    \settodepth{\@myBoxDepth}{#3}
    \settowidth{\@myBoxWidth}{#3}
    \pgfmathsetlength{\@myBoxSize}{max(\@myBoxWidth,(\@myBoxHeight+\@myBoxDepth))}%
	\protect\tikz \node [shape=rectangle, color=white, text=white, minimum size=\@myBoxSize,
    			 fill=#2, inner sep=0, outer sep=0pt] 
    			 {{\fontfamily{phv}\selectfont \normalsize \textcolor{white}{#1}}};%
}%
\begin{document}
\title{ Wordle is \wordlify{{N,P}}{{0,2}}\hspace{-0.6em}-\wordlify{{H, A, R, D}}{{0,1,1,2}}}

\author[1]{Daniel Lokshtanov}
\author[2]{Bernardo Subercaseaux}

\affil[1]{University of California Santa Barbara}
\affil[2]{Carnegie Mellon University, Pittsburgh}
\maketitle

\abstract{Wordle is a single-player word-guessing game where the goal is to discover a secret word $w$ that has been chosen from a dictionary $D$. In order to discover $w$, the player can make at most $\ell$ guesses, which must also be words from $D$, all words in $D$ having the same length $k$. After each guess, the player is notified of the positions in which their guess matches the secret word, as well as letters in the guess that appear in the secret word in a different position. We study the game of Wordle from a complexity perspective, proving NP-hardness of its natural formalization: to decide given a dictionary $D$ and an integer $\ell$ if the player can guarantee to discover the secret word within $\ell$ guesses. Moreover, we prove that hardness holds even over instances where words have length $k = 5$, and that even in this case it is NP-hard to approximate the minimum number of guesses required to guarantee discovering the secret word (beyond a certain constant). We also present results regarding its parameterized complexity and offer some related open problems.}

\section{ \texorpdfstring{\wordlify{{I,N,T,R,O}}{{0,2,1,2,1}}}{Intro} }

\emph{Wordle} ({\scriptsize \href{https://www.nytimes.com/games/wordle/index.html}{https://www.nytimes.com/games/wordle/index.html}}) is a single-player web-based word-guessing game that combines principles of classic games such as Mastermind, Hangman, and Jotto, which have been studied from a computational perspective before~\cite{barbay2020computational, stuckman2005mastermind, 10.1007/978-3-642-30347-0_36, 2009, Ganzfried2012}.
Created by Josh Wardle for his partner~\cite{wordle3}, and published in October 2021, Wordle has reached an unprecedented level of virality and gained millions of daily players~\cite{wordle2, wordle1}.  As for the popularity of the game, Wardle suggested that part of the game's success is due to its artificial scarcity, as it can only be played once a day~\cite{wordle4}. Others have pointed at the distinctive colorful patterns with which players share their results~\cite{wordle5}. This article can be seen as yet another reason for Wordle's success: its intrinsic complexity. The relationship between in games and their computational complexity has been proposed by multiple authors~\cite{Eppstein, 10.5555/1541932}, and thus by establishing theoretical results of hardness for Wordle we can explain part of its challenging nature, and make the reader feel better about their unsuccessful guesses.

The game is played over a finite set of words $D$ (usually called \emph{dictionary}), all of which have length $k$, that is assumed to be fully known to the player. Formally, for a given alphabet $\Sigma$, $D \subseteq \Sigma^k$. The player, who we will denote the \emph{guesser}, wishes to discover a secret word $w \in D$ through a series of at most $\ell$ guesses $p_1, p_2, \ldots, p_\ell$, all of which must be words belonging to $D$.  The guesser is said to win if one of its guesses $p_i$ is exactly equal to $w$, and she loses if no such match occurs after $\ell$ guesses. Whenever a guess $p$ is made, the guesser receives some information about the relation between $p$ and the secret word $w$: she is notified of every position $i$ such that $p[i] = w[i]$, and also of positions $i$ such that the letter $p[i]$ is present in the word $w$ but not in position $i$. More precisely, to handle the case of repeated letters, we will use a notion of \emph{marking} with colors. Iterating over indices $i$, from $1$ up to $k$, we say  $w[i]$ marks $p[i]$ with green iff $w[i] = p[i]$. If $w[i] \neq p[i]$, and the set \[
S_i = \{ j : p[j] = w[i], \; p[j] \text{ is unmarked}\}
\] is not empty, then $w[i]$ marks $p[\min(S_i)]$ with yellow. All letters $p[j]$ that were not marked after iterating over the indices are then marked with gray.
 A few games of Wordle are illustrated in \Cref{fig:wordle-games}.

\begin{figure}
\centering
	\begin{tikzpicture}
		\node at (0,1) {$w$ = A\,B\,B\,E\,Y};
		\node at (0, 0) {$p_1$ \; \wordlify{{A,L,G,A,E}}{{2,0,0,0,1}}};
		\node at (0, -.60) {$p_2$ \; \wordlify{{K,E,E,P,S}}{{0,1,0,0,0}}};
		\node at (0, -.60*2) {$p_3$ \; \wordlify{{O,R,B,I,T}}{{0,0,2,0,0}}};
		\node at (0, -.60*3) {$p_4$ \; \wordlify{{B,R,I,B,E}}{{1,0,0,1,1}}};
		\node at (0, -.60*4) {$p_5$ \; \wordlify{{A,B,B,O,T}}{{2,2,2,0,0}}};
		\node at (0, -.60*5) {$p_6$ \; \wordlify{{A,B,B,E,Y}}{{2,2,2,2,2}}};
		
		\node at (4,1) {$w$ = K\,E\,B\,A\,B};
		\node at (4, 0) {$p_1$ \; \wordlify{{A,B,B,E,Y}}{{1,1,2,1,0}}};
		\node at (4, -.60) {$p_2$ \; \wordlify{{B,A,B,E,S}}{{1,1,2,1,0}}};
		\node at (4, -.60*2) {$p_3$ \; \wordlify{{K,E,E,P,S}}{{2,2,0,0,0}}};
		\node at (4, -.60*3) {$p_4$ \; \wordlify{{K,E,B,A,B}}{{2,2,2,2,2}}};	
		
		\node at (8,1) {$w$ = H\,I\,P\,P\,Y};
		\node at (8, 0) {$p_1$ \; \wordlify{{C,R,A,N,E}}{{0,0,0,0,0}}};
		\node at (8, -.60) {$p_2$ \; \wordlify{{B,O,I,L,S}}{{0,0,1,0,0}}};
		\node at (8, -.60*2) {$p_3$ \; \wordlify{{G,U,M,M,Y}}{{0,0,0,0,2}}};
		\node at (8, -.60*3) {$p_4$ \; \wordlify{{K,I,D,D,Y}}{{0,2,0,0,2}}};
		\node at (8, -.60*4) {$p_5$ \; \wordlify{{J,I,F,F,Y}}{{0,2,0,0,2}}};
		\node at (8, -.60*5) {$p_6$ \; \wordlify{{F,I,Z,Z,Y}}{{0,2,0,0,2}}};

		\end{tikzpicture}
	\caption{Illustration of three different instances of Wordle. In all of them $k = 5$ and $\ell = 6$. The first two games are won by the guesser, while the third one is lost.}
	\label{fig:wordle-games}
\end{figure}
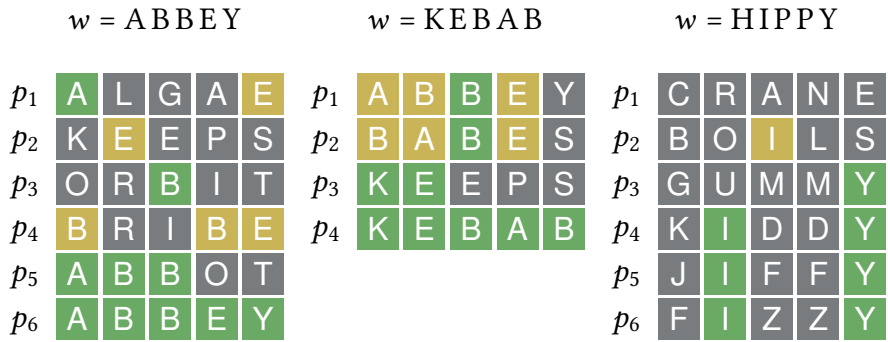

In the actual online version of Wordle, the secret word is always chosen from a fixed subset of the dictionary consisting of the most \emph{common} English words, while the guesses can be part of a larger fixed English dictionary. We do not consider this distinction in our paper as our results of hardness hold already in the case where the same dictionary is used for the secret word and the valid guesses. 

Despite its apparent simplicity, Wordle allows for encoding hard problems, as we present in the next section.

\section{Complexity \texorpdfstring{\wordlify{{R,E,S,U,L, T, S}}{{0,0,1,2,0,0,2}}}{Results}}

In order to study the complexity of Wordle, we need to define an appropriate formal decision problem for it. Let us assume that the guesser wants to design a strategy that ensures winning the game (i.e., guessing correctly within $\ell$ attempts) regardless of what the secret word was chosen to be. Thus, we can pose the following problem:

\begin{center}
\fbox{\begin{tabular}{ll}
\small{PROBLEM} : & {\wordle}
\\{\small INPUT} : & $(D, \ell)$ with $D \subseteq \Sigma^k$ the dictionary,  \\
& and an integer $\ell \geq 1$, the number of guesses allowed. 
\\ 
{\small OUTPUT} : & \textbf{Yes}, if there is a winning strategy for the guesser,  \\
& and \textbf{No} otherwise.
\\
\end{tabular}}
\end{center}

Note that $\Sigma$ and $k$ are not explicitly part of the input, but can trivially be deduced from $D$.
We claim now that if a guesser knew a polynomial time algorithm $A$ for this problem, then they could use $A$ to win the game whenever it was possible, as we explain next. 
First, we will define the notion of feasible words. Observe that at any point in the game, the information the guesser has received thus far (i.e., the letters that have been marked or not in all the previous guesses) defines a dictionary $D' \subseteq D$ of words that  could happen to be the secret word, which we call \emph{feasible} words. For example, after the first guess illustrated in the left-most game of \Cref{fig:wordle-games}, the word \texttt{GAMES} would be infeasible with the information received for the first guess, as the letter \texttt{G} was not marked. Similarly, the word \texttt{KEEPS} used in the second guess of that same game is also infeasible with the result of the first guess, as it does not contain the letter \texttt{A} that was marked in the first position. The word \texttt{AMAZE} would not be feasible either, as in the first guess the letter \texttt{E} at the end is marked with yellow, and thus cannot appear in that position in the secret word.  In contrast, the word \texttt{ANNEX} would be feasible, and naturally \texttt{ABBEY}, the actual secret word of that game, is feasible too. 
Formally, a word $p \in D$ is feasible after a sequence of guesses $p_1, \ldots, p_n$ if and only if the following conditions hold:
\begin{enumerate}
	\item No letter $p[i]$ was marked gray in a previous guess $p_j$, for $1 \leq i \leq k, 1 \leq j \leq n$.
	\item  No letter $p[i]$ was marked yellow in a previous guess $p_j$, such that $p_j[i] = p[i]$, for $1 \leq i \leq k, 1 \leq j \leq n$.
	\item If some previous guess $p_j$ was marked green in $p_j[i]$, then $p[i] = p_j[i]$. This must hold for $1 \leq i \leq k, 1 \leq j \leq n$.
	\item If some previous guess $p_j$ was marked yellow in $p_j[i]$ then $p[t] = p_j[i]$ for some $1 \leq t \neq i \leq k$.
\end{enumerate}

\begin{algorithm}
	\caption{WordleGuesser$(D, \ell)$}
	\label{alg:wordle}
	\If{$\ell = 0$}{
		guesser loses.
	}
	\If{$|D| = 1$}{
		guess the only word in $D$ and win.
	}
	\For{$p \in D$}{
		potentialGuess $\gets$ \textbf{true}\;
		
		\For{$w \in D$} {
			$ m \gets$  marking for guess $p$ if $w$   is the secret word\;
			$D' \gets  \{ p' \in D \mid p' \text{ is feasible after } m\}$\;
			\If{$WordleGuesser(D', \ell-1)$ is a loss for the guesser}{
				potentialGuess $\gets$ \textbf{false}\;
				\textbf{break}
			}
			
		}
		\If{potentialGuess} {
			Guesser can win using $p$ as its next guess.\;
		}
	}
	If this point is reached, then the guesser loses.\;
	
\end{algorithm}

Now, provided a guesser can solve the \wordle\ problem efficiently, they can play optimally by following~Algorithm~\ref{alg:wordle}. To see correctness, we can proceed by induction over $\ell$. The base case $\ell = 0$ is trivially correct. For the inductive case, if $|D| = 1$ then also correctness is trivial. In any other case, if there is a winning strategy for the guesser within $\ell$ attempts, that implies that after the first guess, regardless of what the secret word is, it will be possible to win within $\ell-1$ attempts over the dictionary restricted to the corresponding feasible words. Thus, the inductive hypothesis guarantees that calls to the algorithm with parameter $\ell - 1$ will be correct, thus implying correctness of the algorithm.
This justifies the choice of \wordle\ as a decision problem. Our next step is to study its complexity. We will prove the following theorems:

\begin{theorem}
\wordle\ is $\np$-hard, and it is $\W$-hard when parameterized by $\ell$, the number of guesses allowed.
\label{thm:1}
\end{theorem}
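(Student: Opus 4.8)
The plan is to give a single polynomial-time, parameter-linear reduction from \textsc{Dominating Set} (which is $\np$-hard and $\W$-complete when parameterized by the size $k$ of the dominating set sought), so that both claims fall out at once. Given a graph $G$, first pad it with two isolated vertices to get $G''$ (this harmless preprocessing exists only to make the additive constants line up exactly at the end), let $n = |V(G'')|$, and use the closed-neighborhood set system $\{N[v] : v \in V(G'')\}$. The dictionary $D$ contains two kinds of words, all of a common length $k = O(n)$ over an alphabet of size $O(n)$: a \emph{decoy} $d_v$ for every vertex $v$ (the decoys are the genuinely hard candidate secrets) and a \emph{probe} $P_w$ for every vertex $w$. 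Each word splits into an \emph{identity block} and a \emph{content block}. In the identity block (length $n$, two symbols) every decoy is the all-background word, while $P_w$ is one-hot with a marker in coordinate $w$; this lets one probe guess tell the guesser whether the secret is a probe — and if so exactly which one — or else merely that it is a decoy. In the content block, $d_v$ carries a private token letter $\tau_v$ (unique to $v$) in one fixed slot and a common filler everywhere else, whereas $P_w$ carries the tokens $\{\tau_u : u \in N[w]\}$ in otherwise-unused slots and fillers elsewhere. The behaviour to engineer is that guessing $P_w$ against a decoy $d_u$ produces feedback that eliminates exactly the decoys $\{d_u : u \in N[w]\}$ together with all probes, and in addition pins down $d_u$ itself precisely when $u \in N[w]$ (a yellow lands on $\tau_u$), but is otherwise \emph{completely independent of} $u$ when $u \notin N[w]$.

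Granting that feedback lemma, the forward direction is a greedy sweep: if $G''$ has a dominating set $v_1,\dots,v_s$, the guesser plays $P_{v_1},P_{v_2},\dots$; the identity block immediately separates the case ``secret is a probe'' (then one more guess wins) from ``secret is a decoy'', and in the latter case, because the $N[v_i]$ cover $V(G'')$, some probe guess eventually pins the secret decoy down, after which one final guess wins — so $\gamma(G'') \le s$ gives a winning strategy with about $s+O(1)$ guesses (in fact, using that dominating all of $V(G'')$ except one vertex already isolates the last decoy, $s+1$ guesses once the padding is in place). For the converse I use the obvious adversary: it maintains a feasible set $F \subseteq V(G'')$ of still-possible decoy secrets and answers a probe guess $P_w$ by ``it is a decoy, and not in $N[w]$'' whenever $F \setminus N[w]\neq\emptyset$, and a decoy guess $d_v$ by ``miss'' (deleting only $v$ from $F$). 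The guesser can win only once $|F|\le 1$, at which point it knows the survivor; hence a strategy using $\ell$ guesses forces the guessed neighborhoods, plus at most one leftover singleton, to cover $V(G'')$, i.e.\ $\gamma(G'') \le \ell - O(1)$. The two-isolated-vertex padding makes the forward and backward constants coincide, so the \wordle\ instance $(D,\ell)$ with $\ell = k + O(1)$ is a \textbf{Yes}-instance iff $\gamma(G) \le k$. The reduction is polynomial and $\ell$ depends only on $k$, so $\np$-hardness and $\W$-hardness parameterized by $\ell$ follow simultaneously.

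The main obstacle, and where the real care lies, is verifying that feedback lemma against the paper's exact interleaved, repeated-letter-sensitive marking rule. Two points need genuine bookkeeping: that a probe guess against a \emph{missed} decoy ($u \notin N[w]$) returns feedback with no dependence on $u$ whatsoever — so missed decoys stay mutually indistinguishable and the game really behaves like set covering — which hinges on $\tau_u$ being absent from $P_w$ and on the filler ``yellow-shuffling'' being oblivious to where $\tau_u$ sits; and that the one-hot identity block, again under the single green/yellow/gray pass, leaks the secret's type (and, for a probe secret, its exact identity) but nothing more when the secret is a decoy. Laying out the slots so that token slots, the $\tau$-slot of decoys, and filler slots never collide in a way that triggers an unintended yellow, and then sizing the filler region generously, makes all of this go through; none of it is deep, but it is the part that must be gotten exactly right.
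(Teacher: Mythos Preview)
Your plan is sound and lands in the same territory as the paper's proof: both encode a covering problem by building ``element'' words (your decoys, the paper's element-words) that can only be disambiguated by guessing ``set'' words (your probes, the paper's set-words) whose shared letters light up exactly when the corresponding set covers the secret's element. The differences are in packaging. First, instead of reducing from \textsc{Dominating Set} and repairing the off-by-one with isolated-vertex padding, the paper isolates the off-by-one into an intermediate problem, \textsf{Almost Set Cover} (cover all but at most one element with $c$ sets), proves it $\W$-hard by a one-line doubling reduction from \textsf{Set Cover}, and then gets $\ell=c+1$ exactly with no padding. Second, the paper's gadget is leaner: the element-word $w_{u_i}$ has a $1$ in position $i$ and a private symbol $s_i$ everywhere else, and the set-word $w_S$ is the $\{1,\bot\}$-indicator of $S$; no separate identity block is needed, because a single set-word guess against a set-word secret already pins it down (green/non-green on every coordinate), and against an element-word secret it produces either a green $1$ at the secret's index or a feedback pattern independent of the secret. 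This collapses the marking-rule verification you rightly flag as the delicate step into a short per-coordinate case analysis, whereas your two-block layout with filler shuffling needs the more careful bookkeeping you describe. Your route does not buy anything extra here, but it is a legitimate alternative realisation of the same idea; the one place I would urge you to actually write out rather than assert is the probe-versus-probe case in your identity block, since under the paper's single-pass marking rule repeated background symbols can pick up yellows in order-dependent ways, and you need the non-green positions to pinpoint the secret probe's coordinate unambiguously.
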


\begin{theorem}
\wordle\ cannot be solved in polynomial time unless $\mathrm{P} = \np$ even when restricted to instances where $k=5$.
\label{thm:2}
\end{theorem}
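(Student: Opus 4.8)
The plan is to give a polynomial-time reduction into the $k=5$ restriction of \wordle\ from an NP-hard ``covering / separating'' problem; concretely I would reduce from Set Cover (equivalently Dominating Set), the same family that should already be driving \Cref{thm:1}. The point of difficulty, compared with an unrestricted construction, is that once we are only allowed words of length $5$ we can no longer use position indices to encode ``which universe element the secret is'' or ``which set a guess corresponds to''. Instead we push all of that data into the alphabet $\Sigma$, which the problem statement explicitly allows to be arbitrarily large: the input size is governed by $|D|\cdot k$, so a polynomially large $\Sigma$ is essentially free.

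Given a Set Cover instance $(U,\mathcal{F},t)$ I would build a dictionary $D\subseteq\Sigma^5$ using (i) a distinct \emph{content} letter $\hat{x}$ for each $x\in U$ and $\hat{F}$ for each $F\in\mathcal{F}$, and (ii) a constant number of fixed \emph{anchor} letters used only to stabilize the feedback. The dictionary contains \emph{candidate words} $\mathsf{C}_x$ --- the only words eligible to be the secret, one per $x\in U$ together with a few dummies --- which carry $\hat{x}$ in one designated position and anchors elsewhere; and \emph{test words} $\mathsf{Q}_F$, one per $F\in\mathcal{F}$, designed so that guessing $\mathsf{Q}_F$ against a secret $\mathsf{C}_x$ produces one of exactly two marking patterns according to whether $x\in F$, and --- crucially --- the ``$x\in F$'' pattern does not depend on \emph{which} $x\in F$ it is, so a single guess reveals only the bit ``is the secret's element in $F$?''. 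I would realize this by having $\mathsf{Q}_F$ carry a probe letter associated with $F$ in a fixed position, anchors everywhere else, and by choosing the anchor sets of candidate and test words disjointly so that no spurious yellows are generated; checking that the green/yellow/gray rule (in particular its behaviour on repeated letters) really yields exactly these two patterns is the core gadget computation. Finally I set $\ell=t+1$: a cover of size $t$ is meant to simulate $t$ tests that pin down the feasible set, followed by one finishing guess of the identified candidate.

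Correctness then amounts to showing that the guesser wins within $t+1$ guesses iff $(U,\mathcal{F})$ admits a cover of size $t$. The easy direction turns a size-$t$ cover into a non-adaptive sequence of $t$ test-guesses that isolates the secret, plus one more guess. For the converse one argues that along the worst-case branch of any adaptive winning strategy the set of test-guesses made must cover $U$, hence has size $\ge t$, hence the strategy has depth $\ge t+1$: I would seed the instance with enough dummy candidate words that whenever the asked tests fail to cover $U$, at least two candidates remain with identical feedback to every guess made so far, so the remaining budget cannot finish. The same construction, with parameters chosen so that ``has a cover of size $t$'' versus ``every cover has size $>c\,t$'' is preserved, should also give the constant-factor inapproximability of the optimal number of guesses mentioned in the abstract.

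The step I expect to be the main obstacle is making the feedback of the five-letter gadgets behave exactly as claimed while still leaving room to encode the whole instance: with only five positions one must simultaneously fit a type/anchor scaffold and the content letters, then rule out that a clever guesser ever extracts more than the intended single bit per test (this is the careful case analysis of markings under repeated letters, and precisely where the disjoint-anchor trick is needed), and also arrange the dummy words so that the ``$\ge 2$ indistinguishable secrets remain'' invariant holds even when the underlying Set Cover instance is degenerate (e.g.\ has small covers that happen to separate everything). Once the gadget's feedback is pinned down, the remaining counting argument is the standard decision-tree lower bound for covering.
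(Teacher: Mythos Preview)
Your overall plan --- push the instance into a large alphabet and reduce from a covering problem --- is the same high-level idea the paper uses, but the central gadget as you describe it cannot work, and the paper's actual route differs from yours in two essential respects.

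The concrete gap is this. You say $\mathsf{Q}_F$ carries ``a probe letter associated with $F$'' plus anchors, while $\mathsf{C}_x$ carries $\hat x$ plus a disjoint family of anchors. But then the marking of $\mathsf{Q}_F$ against secret $\mathsf{C}_x$ is the same all-gray pattern regardless of whether $x\in F$: the symbol $\hat F$ does not occur in $\mathsf{C}_x$, and the anchors were chosen disjoint. Wordle feedback is purely syntactic letter matching; unless the letters of $\mathsf{Q}_F$ literally include $\hat x$ for the elements $x\in F$, nothing in the marking can carry the membership bit. With only five positions you cannot pack an arbitrary-size $F$ into $\mathsf{Q}_F$, which is precisely the obstacle you flag at the end but do not resolve.

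The paper sidesteps this by reducing not from general Set Cover but from \emph{Dominating Set on $4$-regular graphs}, where every closed neighbourhood has size exactly $5$. Taking $\Sigma=V(G)$, for each vertex $v$ it puts into $D_G$ the word $w_v$ consisting of $v$ followed by its four neighbours, and the word $w'_v=vvvvv$. Now a guess $w_v$ against a secret $w_u$ or $w'_u$ genuinely reveals, via a green or yellow on the symbol $u$, whether $u\in N[v]$ --- because the guess physically contains the letters of $N[v]$. Second, the paper does \emph{not} aim for the exact equivalence ``wins in $t{+}1$ iff cover of size $t$'' that you propose; it only establishes $\gamma(G)\le W(D_G)\le \gamma(G)+4$ and then invokes the known $(1+\tfrac{1}{390})$-inapproximability of $\gamma$ on $4$-regular graphs: a polynomial-time solver for \wordle\ with $k=5$ would compute $W(D_G)$ exactly and hence approximate $\gamma(G)$ to within an additive constant, which is already impossible unless $\mathrm{P}=\np$. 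So the argument is an approximation-preserving reduction rather than a decision reduction, and the inapproximability corollary falls out immediately rather than requiring a separate parameter-tuning step.
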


\begin{theorem}
	\wordle\ can be solved in polynomial time if the alphabet size $\sigma = |\Sigma|$ is constant.
\label{thm:3}
\end{theorem}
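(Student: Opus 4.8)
The plan is to solve the problem by a memoized recursion, essentially Algorithm~\ref{alg:wordle} reorganized so as to compute, for each set $D'\subseteq D$ of feasible words that can arise, the quantity $\mu(D')$: the least number of guesses with which the guesser can force a win starting from $D'$. From any point in the game the outcome depends only on $D'$ (the allowed guesses are always all of $D$), and $\mu(\emptyset)=0$, $\mu(\{w\})=1$, and $\mu(D')=1+\min_{p\in D}\max_{r}\mu\!\left(D'_{p,r}\right)$ when $|D'|\ge 2$, where $r$ ranges over the markings produced by the words of $D'$ against the guess $p$ and $D'_{p,r}$ is the resulting feasible dictionary; the recursion is well founded because guessing any $p\in D'$ already splits off $\{p\}$. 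Since $\mu(D')\le|D'|$, no separate bound on the remaining budget is needed, and the answer to $(D,\ell)$ is ``\textbf{Yes}'' iff $\mu(D)\le\ell$. Each distinct subproblem $D'$ costs only $\mathrm{poly}(|D|,k)$ work (for each candidate guess, partition $D'$ by the induced marking and read off recursive values), so the running time is $\mathrm{poly}(|D|,k)$ times the number of distinct feasible dictionaries that occur. The theorem thus reduces to: for constant $\sigma=|\Sigma|$, only polynomially many (in the input size) subsets of $D$ can occur as a feasible dictionary.

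The quantitative heart of that claim is that a feasible dictionary is pinned down by a short ``profile''. Conditions (1)--(4) in the definition of feasibility say exactly that $D'$ consists of the words of $D$ consistent with: for each of the $k$ positions, either a letter that position is fixed to (a green mark) or a set of letters excluded from it (yellow marks, together with letters excluded everywhere by gray marks); and, for each of the $\sigma$ letters, a lower bound and possibly an exact value for its multiplicity in the secret. There are at most $2^{(\sigma+1)k}\,(k+1)^{2\sigma}$ such profiles, which for constant $\sigma$ is $(2^{k})^{O(1)}\cdot\mathrm{poly}(k)$ --- polynomial in the input size whenever $k=O(\log(\text{input size}))$. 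The case $\sigma=2$ is a clean sanity check: one guess reveals the exact set of green positions, and over a binary alphabet two distinct words never produce the same green set against a fixed guess (at a position where they differ, exactly one agrees with the guessed letter), so one guess already reduces $D'$ to a singleton and only $O(|D|)$ feasible dictionaries ever appear.

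To cover arbitrary $k$, the plan is to preprocess $D$: a position that is constant on $D$, or that induces the same partition of $D$ as another position up to a relabeling of $\Sigma$, carries no information the guesser can use, so it may be deleted without changing $\mu(D)$. The main obstacle is to make this watertight and then close the counting. One has to verify that these deletions really are answer-preserving given how a single marking couples different positions and interacts with the multiplicity bounds, and then argue that every surviving instance is handled --- by the profile count above when the surviving $k$ is at most logarithmic in the input, and by the trivial bound $2^{|D|}$ on the number of subsets of $D$ when the surviving $k$ is large (so that $|D|$ is small relative to the input). If these two bounds fail to meet cleanly at the boundary, the more robust route, which I would ultimately prefer, is to sharpen the profile count into a count of \emph{reachable} profiles: each informative guess drawn from $D$ can be charged with fixing one more position, excluding one more (position, letter) pair, or tightening one multiplicity bound, and the aim would be to bound the number of distinct feasible dictionaries these generate by a polynomial whose degree depends only on $\sigma$.
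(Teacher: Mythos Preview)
Your proposal has a genuine gap. The crux of your argument is bounding the number of distinct feasible dictionaries $D'$ by a polynomial in the input size, but your profile count $2^{(\sigma+1)k}(k+1)^{2\sigma}$ is exponential in $k$, and neither of your two patches closes this. The position-deletion preprocessing is not obviously sound (yellow marks couple positions through letter multiplicities, so removing a position can change which markings arise and hence which $D'$ are reachable), and even granting soundness you give no argument bounding the surviving $k$. Your fallback bound $2^{|D|}$ is polynomial only when $|D|=O(\log(\text{input}))$; in the regime $k\approx|D|\approx\sqrt{N}$ both bounds are $2^{\Theta(\sqrt{N})}$, and you yourself concede they may ``fail to meet cleanly.'' The ``more robust route'' you close with is a hope, not an argument.

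What you are missing is a much simpler observation that short-circuits all of this: over an alphabet of size $\sigma$, the guesser can \emph{always} win within $\sigma$ guesses. Just guess any currently feasible word; at each position $i$ the guessed letter is either marked green (fixing position $i$ forever) or not (ruling that letter out at position $i$), so after $\sigma-1$ such guesses every position is determined and one more guess wins. With this depth bound the game tree has size $|D|^{O(\sigma)}$, polynomial for constant $\sigma$---and your memoized recursion would work too, since at most $|D|^{O(\sigma)}$ distinct $D'$ can ever be reached. The paper's proof is exactly this two-liner: if $\ell\ge\sigma$ answer \textbf{Yes}; otherwise $\ell<\sigma=O(1)$ and brute-force the depth-$\ell$ game tree.
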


\newcommand{\F}{\mathcal{F}}
In order to prove \cref{thm:1}, we will reduce from \textsf{Almost Set Cover}. An input of \textsf{Almost Set Cover} is a pair $(\mathcal{F}, c)$, where $\mathcal{F}$ is a family of sets $S_1, \ldots, S_{|\mathcal{F}|}$ whose union we denote by $U$, and $c \geq 1$ is an integer. The goal is to decide whether there is a sub-family $\mathcal{F}' \subseteq \mathcal{F}$ of at most $c$ sets such that \[
\left| U \setminus \left(\bigcup_{S \in \mathcal{F}'} S\right) \right| \leq 1.
\]
In other words, the goal is to decide whether it is possible to cover all the elements of the universe, except for perhaps one of them, with $c$ sets. \textsf{Almost Set Cover} is  $\W$-hard (parameterized by $c$), as we show next. It is a simple reduction from standard \textsf{Set Cover}, which is known to be $\W$-hard.

\begin{lemma}
	\textsf{Almost Set Cover} is $\W$-hard, when parameterized by $c$.
	\label{lem:1}
\end{lemma}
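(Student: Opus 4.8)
The plan is to give a parameterized reduction from \textsf{Set Cover} --- given a family $\mathcal{F}$ over a universe $U$ and an integer $c$, decide whether some subfamily of at most $c$ sets has union $U$ --- which is classically $\W$-hard when parameterized by the solution size $c$, to \textsf{Almost Set Cover} with the very same parameter. The only gap between the two problems is that \textsf{Almost Set Cover} tolerates one uncovered element, so the entire job of the reduction is to design a gadget that makes ``leaving one element uncovered'' worthless, thereby forcing a genuine cover of the original universe.

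Concretely, I would use \emph{element duplication}. Starting from a \textsf{Set Cover} instance $(\mathcal{F}, c)$ (where, as is standard, we may assume $U = \bigcup_{S \in \mathcal{F}} S$), let the new universe be $U' = U \times \{1, 2\}$, two disjoint copies of $U$, and for each $S \in \mathcal{F}$ put the set $S' := S \times \{1, 2\}$ into the new family $\mathcal{F}'$. The output of the reduction is $(\mathcal{F}', c)$, keeping the parameter unchanged; this is trivially computable in polynomial time, so it only remains to check that it preserves the answer.

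For the equivalence: if $\mathcal{C} \subseteq \mathcal{F}$ covers $U$ with $|\mathcal{C}| \le c$, then $\{S' : S \in \mathcal{C}\}$ covers all of $U'$ and in particular leaves at most one element uncovered, so $(\mathcal{F}', c)$ is a Yes-instance. Conversely, let $\mathcal{D}' \subseteq \mathcal{F}'$ with $|\mathcal{D}'| \le c$ leave at most one element of $U'$ uncovered, and set $\mathcal{D} := \{S \in \mathcal{F} : S' \in \mathcal{D}'\}$, so $|\mathcal{D}| \le |\mathcal{D}'| \le c$. If some $u \in U$ were uncovered by $\mathcal{D}$, then both copies $(u,1)$ and $(u,2)$ would be uncovered by $\mathcal{D}'$, contradicting that $\mathcal{D}'$ misses at most one element; hence $\mathcal{D}$ covers $U$, and $(\mathcal{F}, c)$ is a Yes-instance. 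Combined with the $\W$-hardness of \textsf{Set Cover} parameterized by $c$, this proves the lemma.

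The part that needs the most thought is not any computation but the choice of gadget: one wants the equivalence to be exactly tight \emph{and} the parameter to stay equal to $c$ (or at worst a function of $c$), and the obvious first tries --- e.g., adjoining one fresh element together with a singleton set containing it --- introduce an off-by-one in the budget and break the reduction. Duplicating the universe resolves this because the single unit of ``slack'' allowed by \textsf{Almost Set Cover} is automatically consumed by the observation that any missed original element drags its twin along with it.
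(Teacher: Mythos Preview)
Your proof is correct and is essentially identical to the paper's: both reduce from \textsf{Set Cover} by duplicating every universe element (the paper writes $u^+,u^-$ where you write $(u,1),(u,2)$) and replacing each set $S$ by its doubled version, keeping the parameter $c$ unchanged. The correctness argument is the same observation that a missed original element forces two missed copies, exceeding the one-element slack.
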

\begin{proof}
Let $(\F, c)$ be an instance of \textsf{Set Cover}, and let $U = \bigcup \F$ be its universe. Create first $U'$ of size $2|U|$ in the following way: for each element $u \in U$ put elements $u^+$ and $u^-$ in $U'$. Now, create $\F'$ with $|\F|$ sets as follows: for each set $S \in \F$, put into $\F'$ a set
\[
 S' = \{ u^+, u^- \mid u \in S \}.
\]
We claim $(\F, c) \in \textsf{Set Cover}$ iff $(\F', c) \in \textsf{Almost Set Cover}$. For the forward direction, if $\F^* \subseteq \F$ is a set cover of size $c$ for $\F$, then its corresponding sub-family
\[
	\F^{*'}  = \{ S' \mid S \in \F^*\}
\] 
is trivially a set cover for $\F'$. On the other hand, if no sub-family $K$ of at most $c$ sets in $\F$ covers $\F$, that means that for every such sub-family $K$, there is at least one element $u \not \in \bigcup K$. But now if we consider 
\[
	K' = \{ S' \mid S \in K\},
\] 
then both $u^+$ and $u^-$ are not covered by $K'$. This implies that every sub-family $K' \subseteq \F'$ of size at most $c$ leaves at least two elements outside. This concludes the reduction, as it can be clearly computed in $\fpt$-time.
\end{proof}

We are now ready to prove \cref{thm:1}.
\begin{proof}[Proof of Theorem~\ref{thm:1}]
Let $(\mathcal{F}, c)$ be an instance of \textsf{Almost Set Cover}.
We will build a dictionary $D_\F$ as follows. First, we will need $|U|+2$ symbols, that we will denote $\Sigma_\F = \{\bot, 1, s_1, \ldots, s_{|U|}\}$. Now,  $D_\F$ will contain two different kinds of words, \emph{element-words} and \emph{set-words}, both of which will have length $|U|$. For every element $u_i \in U$, build its corresponding word $w_{u_i}$ as:
\[
	w_{u_i}[j] = \begin{cases}
		1 & \text{if } i = j\\
		s_i & \text{otherwise.}
	\end{cases}
\]
All words $w_{u_i}$ created this way constitute the set of element-words. Now, for every set $S_i \in \F$, build its corresponding set-word $w_{S_i}$ as:
\[
	w_{S_i}[j] = 
	\begin{cases}
		1 & \text{if } u_j \in S_i\\
		\bot & \text{otherwise.}
	\end{cases}
\]
We now claim that $(\mathcal{F}, c) \in \textsf{Almost Set Cover}$ iff $(D_\F, c+1) \in \wordle$. 	For the forward direction, if there is a family of sets $\F'$, with $|\F'| \leq c$ and such that $|U \setminus \bigcup \F'| \leq 1$, then the guesser can use the set-words $w_S$ for every $S \in \F'$ to guarantee a win. 
Indeed, fix some ordering $S_1, \ldots, S_{|\F'|}$ of $\F'$ and start with guess $w_{S_1}$. If the secret word $w$ was chosen to be a set-word, then two cases can arise after the first guess $w_{S_1}$: either the secret word $w$ is revealed to be exactly $w_{S_1}$, in which case the guesser wins on her first turn, or it is another set-word $w_{S'}$, in which case a symbol $\bot$ will be marked yellow, and we claim that  $w_{S'}$ will be the only feasible word remaining and thus the guesser will win in her second turn. Indeed, if we consider any element $u_i \in U$, there are 4 cases:

\begin{enumerate}
	\item $(u_i \in S, u_i \in S')$. In this case $w_{S}[i] = w_{S'} = 1$, and it will be marked green. This makes all words that do not have a $1$ in their $i$-th position infeasible.
	\item $(u_i \in S, u_i \not \in S')$. In this case $w_{S}[i] = 1$, and it will not be marked green. This makes all words that have a $1$ in their $i$-th position infeasible.
	\item $(u_i \not\in S, u_i \in S')$. In this case $w_{S}[i] = \bot$, and it will not be marked green. This makes all words that have a $\bot$ in their $i$-th position infeasible.
	\item $(u_i \not\in S, u_i \not\in S')$. In this case $w_{S}[i] = w_{S'} = \bot$, and it will be marked green. This makes all words that do not have a $\bot$ in their $i$-th position infeasible.
\end{enumerate}
Note that regardless of the case, words that do not agree with $w_{S'}$ on their $i$-th position become infeasible, and as this happens for every $i \in [U]$, the only remaining feasible word is $w_{S'}$. As $c+1 \geq 2$, and we have shown that if the secret word is a set-word then the guesser can win in at most 2 turns, we can safely proceed to the case where the secret word is an element-word. For this case the guesser can first make $c$ guesses corresponding to all the words $w_S$ for $S \in \F'$.  Let $w_{u_i}$ be the secret element word. As $|U \setminus \bigcup \F'| \leq 1$, there are two cases: in the first case, there is some $S_j \in \F'$ such that $u_i \in S_j$. Whenever the guess $w_{S_j}$ is made, as it will happen that $w_{S_j}[i] = w_{u_i}[i] = 1$. This will reveal a $1$ in the $i$-th position of the secret word, and as it is an element-word, this unambiguously reveals $w_{u_i}$ to the guesser, who will win in the next turn, and thus in no more than $c+1$ guesses. The second case is when $w_{u_i}$, the secret word, is the only element of $U \setminus \bigcup \F'$, and thus the guesser simply says that word in its next turn, also ensuring a win.
 This concludes the forward direction.
 
 For the backward direction, assume every sub-family $\F' \subseteq \F$ leaves at least two elements $u_1$ and $u_2$ of $U$ uncovered. Then we claim that no strategy for the guesser can guarantee a win within $c+1$ attempts.  Indeed, for any set of $c$ initial guesses the guesser can make, we claim that there are at least two feasible words remaining, and thus the guesser cannot ensure a win in her last turn. To see this, consider any sequence $W = w_1, \ldots, w_c$ of $c$ guesses, and then define 
 \[
 \F_\text{sets} = \{ S \in \F \mid w_S \in W\} \quad ; \quad \F_\text{elem} = \{ \{u\} \subseteq U \mid w_u \in W\}.
 \]
 Now observe that every element-word $w_u$ such that $u \not \in \bigcup{\left(\F_\text{sets} \cup \F_\text{elem}\right)}$, is feasible after the sequence of guesses $W$. Also notice that for every $S \in \F_\text{elem}$ there is a set $S' \in \F$ such that $S \subseteq S'$. Thus, if for every set $S \in \F_\text{elem}$ we choose a set  $S' \in \F$ such that $S \subseteq S'$, we get a collection of sets $\F'$ such that $\left( \bigcup \F_\text{elem}  \right)\subseteq  \left( \bigcup \F' \right)$ and $|\F'| \leq |\F_\text{elem}|$. This implies that
 \(
 	\left(\F_\text{sets} \cup \F'\right) \subseteq \F \)
and $	\left(\F_\text{sets} \cup \F'\right) $  is a collection of at most $c$ sets, and thus there are at least two elements $u_1, u_2 \in U$ that are not on its union. We thus also have that 
 \[
\{u_1, u_2\} \cap \left( \bigcup{\F_\text{sets} \cup \F_\text{elem}}\right) = \emptyset,
 \]
 and consequently, both $w_{u_1}$ and $w_{u_2}$ are feasible words, which implies it is not possible to guarantee a win in the next guess. This concludes the proof.
 
\end{proof}

\Cref{thm:2} captures an arguably essential part of Wardle's Wordle: difficulty does not require long words. Its proof is inspired by that of coNP-hardness for \textsf{Evil Hangman} by Barbay and Subercaseaux \cite{barbay2020computational}. We will use a result in hardness of approximation to prove that {\sf Wordle} cannot be solved efficiently unless $\mathrm{P} = \np$. In particular, we will use that $\gamma(G)$, the size of the smallest dominating set in a $4$-regular graph $G$, is $\np$-hard to approximate within $1+\frac{1}{390}$~\cite{Chlebk2006, Chlebk2008}. For our purpose it will be enough to use that is $\np$-hard to compute a $(1+\epsilon)$-approximation. 

\begin{proof}[Proof of Theorem~\ref{thm:2}]

Let $G$ be a 4-regular graph. We will work with words of length  $k = 5$. The alphabet $\Sigma$ will be the vertex set of $G$. We build a dictionary $D_G$ from $G$ as follows: for every vertex $v \in G$, we create two words, $w_v$ and $w'_v$: the word $w_v$ has $v$ as its first symbol is $v$, and its 4 remaining symbols are the neighbors of $v$ (the order of these 4 symbols can be chosen arbitrarily), while the word $w'_v$ consists simply of the symbol $v$ repeated $5$ times.
Given a dictionary $D$, let $W(D)$ be the minimum value of $\ell$ such that $(D, \ell)$ is a positive instance of {\sf Wordle}. We now claim that $\gamma(G) \in [W(D_G)-4, W(D_G)]$. Note that this claim is enough to prove the theorem, as a polynomial time algorithm for \textsf{Wordle} would imply that $W(D_G)$ can be computed in polynomial time, and thus $\gamma(G)$ could be approximated up to an additive constant, contradicting its $1+\frac{1}{390}$ hardness of approximation~\cite{Chlebk2006, Chlebk2008}.
	In order to prove $\gamma(G) \geq W(D_G)-4$, we show that a dominating set of size $d$ for $G$ implies a guessing strategy that guarantees a win within $d+4$ guesses.  Assume $G$ has a dominating set $S$ of size at most $d$ and the secret word is either $w_u$  or $w'_u$ for some $u \in V(G)$. Now consider the sequence of guesses $w_v, v \in S$. If $u \in S$ we have two cases, either the secret word was $w_u$, and thus the secret word was already guessed in $d$ guesses, or the secret word was $w'_u$, in which case the first symbol of guess $w_u$ was marked green, leaving only $w'_u$ as a feasible word, and thus allowing a guaranteed win within $d+1$ guesses.
	If $u \not \in S$, then as $S$ is a dominating set $u$ must be a neighbor of some vertex $v^\star$ in $S$. If the secret word was $w_u$, then the symbol $u$ in guess $w_{v^\star}$ must have been marked yellow, which allows to find $w_u$ by guessing the sequence of words $w_{u'}$ for $u' \in N(v^*)$, which must contain $w_u$, as $u$ is a neighbor of $v$. This guessing strategy has at most $d+4$ guesses and it is guaranteed to succeed. 
	On the other hand, if the secret word was $w'_u$, then the symbol $u$ in guess $w_{v^\star}$ must have been marked green, which makes $w'_u$ the only feasible word, and thus allows to win in $d+1$ guesses. 
	In order to prove $\gamma(G) \leq W(D_G)$, we show that is not possible to guarantee a win within $\gamma(G)-1$ guesses. Indeed, any sequence $\sigma$ of $\gamma(G)-1$ guesses induces a set $S$ of vertices by considering the first symbol of each guess. As $|S| \leq \gamma(G)-1$, there needs to be a vertex $v$ that is not dominated by $S$,  the word $w'_v$ is still a feasible word that is not part of $\sigma$, and thus in the case where every request in $\sigma$ was entirely marked with gray, then $\sigma$ does not make the guesser win. As this is true for any sequence $\sigma$ of length $\gamma(G) -1$, we conclude $W(D_G) \geq \gamma(G)$. 

%
%
%
	\end{proof}
We naturally obtain hardness of approximation as a corollary.
\begin{corollary}
	Let $W(D)$ be the minimum number of guesses $\ell$ such that the guesser can guarantee to win a game of Wordle over dictionary $D$ within $\ell$ guesses, i.e., $(D, \ell) \in {\sf Wordle}$. Then it is $\np$-hard to approximate $W(D)$ within $1 + \frac{1}{390}$.
\end{corollary}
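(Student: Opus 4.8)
The plan is to derive the corollary directly from the quantitative relationship established in the proof of \Cref{thm:2}, namely that $\gamma(G) \in [W(D_G) - 4,\, W(D_G)]$ for every $4$-regular graph $G$, together with the known $\np$-hardness of approximating $\gamma(G)$ within a factor $1 + \frac{1}{390}$ on $4$-regular graphs \cite{Chlebk2006, Chlebk2008}. The core idea is the standard gap-preserving reduction argument: an approximation algorithm for $W(D)$ on the dictionaries $D_G$ would, via the additive $\pm 4$ slack, yield an approximation of $\gamma(G)$ that is too good to exist unless $\mathrm{P} = \np$.

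First I would set up the reduction: given a $4$-regular graph $G$, build $D_G$ in polynomial time as in the proof of \Cref{thm:2}, and note that $|D_G| = 2|V(G)|$ and each word has length $5$, so this is a polynomial-time many-one construction. Next I would suppose, for contradiction, that there is a polynomial-time algorithm $A$ that approximates $W(D)$ within a factor $1 + \frac{1}{390}$; running $A$ on $D_G$ returns a value $\widehat{W}$ with $W(D_G) \le \widehat{W} \le \left(1 + \frac{1}{390}\right) W(D_G)$. The key step is to convert this into an estimate of $\gamma(G)$: using $\gamma(G) \le W(D_G) \le \gamma(G) + 4$, one gets $\gamma(G) \le \widehat{W} \le \left(1 + \frac{1}{390}\right)(\gamma(G) + 4) = \left(1 + \frac{1}{390}\right)\gamma(G) + 4\left(1 + \frac{1}{390}\right)$, so $\widehat{W}$ is a $\gamma(G)$-approximation up to a multiplicative $1 + \frac{1}{390}$ factor and a fixed additive constant.

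The one genuine subtlety — and the step I expect to be the main obstacle — is that a purely additive constant slack does not immediately contradict a \emph{multiplicative} inapproximability bound: on instances where $\gamma(G)$ is itself a small constant, an additive error of $4$ swamps the claimed ratio. The resolution is the familiar disjoint-union amplification trick: given $G$, consider $t$ vertex-disjoint copies $G^{(t)} = G \sqcup \cdots \sqcup G$, which is still $4$-regular, with $\gamma(G^{(t)}) = t\,\gamma(G)$. Since the additive slack between $\gamma$ and $W(D_{G^{(t)}})$ is still exactly $4$ (the strategy-and-feasibility analysis of \Cref{thm:2} is insensitive to the number of components, or alternatively one works with $D_{G^{(t)}}$ directly and re-runs the same argument), choosing $t$ large enough — growing with $n = |V(G)|$, e.g. $t = n$ — makes $\gamma(G^{(t)}) = t\,\gamma(G) \ge t$ dominate the additive constant, so that a $\left(1 + \frac{1}{390}\right)$-approximation of $W(D_{G^{(t)}})$ yields a $\left(1 + \frac{1}{390} + o(1)\right)$-approximation of $\gamma(G^{(t)})$, hence of $\gamma(G)$.

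Finally I would observe that this contradicts the hardness result of \cite{Chlebk2006, Chlebk2008}: since one can in fact pick the blow-up so that the resulting ratio is strictly below the inapproximability threshold (the $\frac{1}{390}$ bound has a little room, or one simply notes that the hardness holds for \emph{every} constant above $1$ up to $1 + \frac{1}{390}$, so a $\left(1 + \frac{1}{390} - \epsilon\right)$-approximation is equally forbidden and absorbing the $o(1)$ loss is harmless), we conclude that approximating $W(D)$ within $1 + \frac{1}{390}$ is $\np$-hard, which is exactly the statement of the corollary. The whole argument is short precisely because all the combinatorial content is already carried by the two-sided bound $\gamma(G) \in [W(D_G) - 4, W(D_G)]$ from \Cref{thm:2}; the corollary is just packaging it as a gap-preserving reduction.
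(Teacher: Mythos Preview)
Your proof is correct and follows the paper's approach, which in fact gives no separate argument at all: the corollary is simply stated as an immediate consequence of the two-sided bound $\gamma(G) \in [W(D_G)-4,\, W(D_G)]$ established in the proof of \Cref{thm:2}. You are actually more careful than the paper in making the additive-to-multiplicative conversion explicit via the disjoint-union amplification, whereas the paper silently relies on the hard instances of~\cite{Chlebk2006, Chlebk2008} having domination number growing with the input size so that the additive slack of $4$ is asymptotically negligible.
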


Interestingly, both the proof of \cref{thm:1} and \cref{thm:2} crucially depend on variable length alphabets. This if further confirmed by \cref{thm:3}, which will we prove next. First, consider the following lemma:
\begin{lemma}
$\wordle$ restricted to instances where $\ell \leq c$, for some fixed constant~$c$, can be solved in polynomial time.
\label{lemma:2}
\end{lemma}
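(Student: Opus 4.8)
The plan is to observe that Algorithm~\ref{alg:wordle} (\texttt{WordleGuesser}), whose correctness was already established by induction on $\ell$, \emph{already} runs in polynomial time once $\ell$ is bounded by a constant; so nothing new needs to be designed, only a running-time analysis carried out. Write $n = |D|$; since the input size is at least $\max\{n,k\}$, it suffices to bound the running time of \texttt{WordleGuesser} by a polynomial in $n$ and $k$. The key observation is that the recursion depth of \texttt{WordleGuesser} is exactly $\ell$, since the budget strictly decreases by one at every recursive call and the recursion stops upon reaching $0$; hence, when $\ell \leq c$, the recursion tree has depth at most $c$.

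Next I would bound the branching and the per-node work. At a node whose current dictionary is some $D' \subseteq D$, the algorithm iterates over all pairs $(p,w) \in D' \times D'$ --- at most $n^2$ of them --- and for each pair it computes the marking of $p$ against $w$ in $O(k)$ time, then the feasible sub-dictionary $D''$ (by checking the four feasibility conditions for each of the $\leq n$ candidate words) in time polynomial in $n$ and $k$, and finally makes a single recursive call on $(D'', \ell-1)$. Consequently the branching factor of the recursion tree is at most $n^2$, and the work performed at a node outside of its recursive calls is $\mathrm{poly}(n,k)$. Summing over the tree, whose size is $\sum_{j=0}^{\ell}(n^2)^j = O(n^{2\ell})$, the total running time is $O(n^{2\ell})\cdot \mathrm{poly}(n,k)$, which for $\ell \leq c$ is $\mathrm{poly}(n,k)$, i.e.\ polynomial in the input size. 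One can shave the exponent by memoizing over the distinct sub-dictionaries that can arise --- at most $n^{2j}$ at depth $j$, hence $O(n^{2c})$ in all --- but this refinement is not needed.

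The only point that deserves a line of care, and the closest thing to an obstacle here, is the branching bound: from a state $(D', \ell)$ the algorithm tries at most $|D'|$ next guesses, and for each of them at most $|D'|$ responses (equivalently, at most $|D'|$ resulting feasible sub-dictionaries, one per potential secret word). That is exactly what makes the otherwise exponential game-tree search collapse to polynomial time when $\ell$ is a fixed constant: bounding $\ell$ bounds the recursion depth, removing the only superpolynomial ingredient of the search. I therefore expect no substantive difficulty beyond this bookkeeping.
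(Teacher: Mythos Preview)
Your proposal is correct and matches the paper's own proof: both bound the game-tree (equivalently, the recursion tree of \texttt{WordleGuesser}) by noting depth at most $\ell$ and branching at most $|D|^2$, yielding $|D|^{O(\ell)} = |D|^{O(1)}$ when $\ell \le c$. Your write-up is slightly more detailed in accounting for the per-node work, but the argument is the same.
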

\begin{proof}
This can be seen by analizing the branching factor and depth of the associated game-tree. The game-tree has a branching factor of $O(D)$, as at any point the guesser can choose one of the $O(D)$ feasible words remaining, and for each such word there are at most $O(D)$ possible responses the guesser can get, depending on what the secret word $w$ was at that point. The game-tree has also depth at most $\ell$, which implies thus that the size of the game-tree is at most $O(D^\ell) = D^{O(1)}$, and thus the optimal guess at any given point can be computed in polynomial time. The result follows from this.
\end{proof}

Now we can a prove a key lemma relating $\sigma = |\Sigma|$ and $\ell$.

\begin{lemma}
On a game of \wordle\ over an alphabet $\Sigma$ of size $\sigma = |\Sigma|$, the guesser can always win in at most $\sigma$ guesses.
\label{lemma:3}
\end{lemma}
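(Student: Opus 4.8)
The plan is to induct on $\sigma$ and build a strategy that uses at most $\sigma-1$ guesses to pin the secret word $w$ down to a single feasible word, and then spends one last guess naming it, for a total of at most $\sigma$.

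Here is the skeleton I would follow. If $|D|\le 1$ the guesser wins in at most one guess, and if $\sigma=1$ then $D$ consists of the single constant word of length $k$; so assume $|D|\ge 2$ and $\sigma\ge 2$. For each position $i$ write $F_i$ for the set of letters that occur at position $i$ over the current feasible set; since $|D|\ge 2$ some $F_i$ has size at least $2$. The first guesses are chosen to progressively ``cover'' the letters at each position: I would look for words $g_1,\dots,g_{\sigma-1}\in D$ such that for every position $i$, at most one element of $F_i$ is different from all of $g_1[i],\dots,g_{\sigma-1}[i]$. Guessing $g_1,\dots,g_{\sigma-1}$ in turn (stopping early if one equals $w$) then forces $w[i]$ for every $i$: if $w[i]$ is one of the covered values it appears green, and otherwise it is the unique uncovered element of $F_i$. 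Hence after at most $\sigma-1$ guesses $w$ is the only feasible word, and the $\sigma$-th guess wins. Restricting attention to feasible words that agree on coordinates already pinned and re-running the argument on the shorter residual dictionary is what makes the induction close.

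The hard part will be proving that $\sigma-1$ such ``covering'' words actually exist inside $D$, and that the induction really goes through. The key subtlety is that one cannot hope to shrink the alphabet by one with each individual guess: if $D$ is the set of all $\sigma!$ permutations of $\sigma$ distinct letters (with $k=\sigma$), then after any single guess there is a response under which two feasible words survive and still jointly use all $\sigma$ letters, so a naive ``eliminate one letter per guess'' argument fails — one must genuinely exploit that several well-chosen guesses collapse the feasible set simultaneously. For that permutation instance the $g_j$ are exactly the $\sigma-1$ cyclic shifts, and the claim specializes to the standard fact that $\sigma-1$ cyclic shifts determine an unknown permutation of $\sigma$ elements. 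I would try to obtain the covering words in general by a Hall-type / exchange argument, using that the words of $D$ realize every value of $F_i$ at position $i$; handling repeated letters within a single word, and ensuring that after pinning coordinates the residual dictionary indeed has alphabet size at most $\sigma-1$ whenever we recurse, is where I expect the real care to be needed.
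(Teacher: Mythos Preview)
Your proposal has two problems. First, it is incomplete: the crux---that $\sigma-1$ ``covering'' words $g_1,\dots,g_{\sigma-1}$ always exist inside $D$---is only asserted, with a gesture toward a ``Hall-type / exchange argument'' and an explicit acknowledgment that this is ``where the real care is needed.'' The induction on $\sigma$ you announce is never actually set up: you spend all $\sigma-1$ guesses at once rather than reducing to an instance over an alphabet of size $\sigma-1$, and the closing sentence about ``re-running on the residual dictionary'' does not connect to any smaller-alphabet subproblem (pinning some coordinates green does not shrink the alphabet on the remaining ones).

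Second, and more to the point, the detour is unnecessary because the ``naive'' idea you reject actually works once it is read per position rather than globally. The paper's proof is simply: at every step guess \emph{any} currently feasible word, and for each index $i$ track
\[
S(i)=\{\,s\in\Sigma:\text{some currently feasible word has }s\text{ at position }i\,\}.
\]
If the feasible guess $g$ is marked green at position $i$ then $S(i)$ collapses to $\{g[i]\}$; if it is yellow or gray there then $g[i]$ is expelled from $S(i)$. Since $g$ was feasible we had $g[i]\in S(i)$ beforehand, so each $|S(i)|$ either becomes $1$ or strictly decreases. After $\sigma-1$ feasible guesses every $S(i)$ is a singleton, hence at most one feasible word remains and the $\sigma$-th guess wins. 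Your permutation example only shows that the \emph{global} alphabet used by the surviving feasible words need not shrink after one guess; it says nothing about the per-position sets $S(i)$, each of which does shrink. So the ``eliminate one letter per guess'' argument is exactly right---it just has to be applied coordinatewise, with the guess taken adaptively from the current feasible set so that $g[i]\in S(i)$ is automatic.
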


\begin{proof}
	Consider the following simple algorithm for the guesser: \emph{at any point of the game, choose any feasible word as a guess}. We will show that this algorithm requires at most $\sigma$ guesses to find the secret word. Indeed, for each index $1 \leq i \leq k$, let us define sets $S(i) \subseteq \Sigma$ as:
	\[
		S(i) = \{ s \in \Sigma \mid \text{there is a feasible word } w \in D, \text{ such that } w[i] = s \}.
	\]
	Note that, before the first guess, $S(i) = \Sigma$ for every $i$. The key idea is now the following: after a feasible word $w$ is guessed by the aforementioned strategy, the following holds for every $1 \leq i \leq k$: $w[i]$ can either be marked green, in which case $S(i) \leftarrow \{ w[i] \}$, or it can be marked with gray or yellow, in which case $S(i) \leftarrow S(i) \setminus \{ w[i] \}$. This implies that after every feasible guess each set $S(i)$ either becomes a singleton or decreases its size by $1$. Therefore, by doing $\sigma - 1$ feasible guesses, either the secret word has been found (in which case we are done), or it happens that every set $S(i)$ must be a singleton, and thus there will be only one remaining feasible word, which allows the guesser to win within $\sigma$ guesses in total.
\end{proof}

We are now ready to use the previous lemmas and prove \cref{thm:3}.

\begin{proof}[Proof of \cref{thm:3}]
	Consider an instance $(D, \ell)$ of \wordle\ where $|\Sigma| \in O(1)$. By \cref{lemma:3}, if $|\Sigma| \leq \ell$, then simply answer \textbf{Yes}. Otherwise $\ell \leq |\Sigma| = O(1)$, and thus by \cref{lemma:2} the result follows.
\end{proof}

\section{\texorpdfstring{\wordlify{{O, P, E, N}}{{0,0,1,0}}}{Open} Problems}

We now offer  problems that our analysis leaves open.

\begin{enumerate}
	\item Is \wordle\ in $\np$, or is it $\mathrm{PSPACE}$-complete? It is not hard to see that \wordle\ is in $\mathrm{PSPACE}$, by simply computing its associated game-tree, which has polynomial depth by~\cref{lemma:3}. \textbf{Latest Update:} Rosenbaum has published a proof of membership in $\np$~\cite{https://doi.org/10.48550/arxiv.2204.04104}, and it extends as well to cover the third problem on this list. 
	\item Is \wordle\ in $\fpt$ when parameterized by $\sigma = |\Sigma|$? The proof of \cref{thm:3} provides an algorithm in time $D^{O(\ell)}$, and it seems challenging to obtain a $\textrm{poly}(D) \cdot f(\ell)$ algorithm for some computable function $\ell$.
	\item How  is the complexity of \wordle\ affected by having a dictionary that is not presented as a list of word, but rather in some more compact representation as a finite automaton? We can envision $D$ to be provided as a finite automaton, for which all accepted words of length $k$ are considered part of the dictionary. This affects for example the proof of \cref{thm:3}, as now a branching factor of $|D|$ can be exponential in the input size.
\end{enumerate}

\section*{Acknowledgements}
	The second author thanks J\'er\'emy Barbay for his insightful conversation, and his awesome friends: Saranya Vijayakumar for proofreading the English in an earlier draft of this paper, and Bailey Miller for providing him a place (and tea) to work on this article. We also thank anonymous reviewers for helpful suggestions, Mark Polyakov for pointing out a minor bug in a previous proof of Theorem 1, and Wassim Bouaziz for pointing out some imprecisions in a previous version of this paper. 
	
\section*{Funding statement}
Lokshtanov is supported by BSF award 2018302 and NSF award CCF-2008838. Subercaseaux is supported by NSF
awards CCF-2015445, CCF1955785, and CCF-2006953.

\printbibliography

@inproceedings{barbay2020computational,
  title={The Computational Complexity of Evil Hangman},
  author={Barbay, J{\'e}r{\'e}my and Subercaseaux, Bernardo},
  booktitle={10th International Conference on Fun with Algorithms (FUN 2020)},
  year={2020},
  organization={Schloss Dagstuhl-Leibniz-Zentrum f{\"u}r Informatik}
}

@misc{https://doi.org/10.48550/arxiv.2204.04104,
  doi = {10.48550/ARXIV.2204.04104},
  
  url = {https://arxiv.org/abs/2204.04104},
  
  author = {Rosenbaum, Will},
  
  keywords = {Computational Complexity (cs.CC), FOS: Computer and information sciences, FOS: Computer and information sciences},
  
  title = {Finding a Winning Strategy for Wordle is NP-complete},
  
  publisher = {arXiv},
  
  year = {2022},
  
  copyright = {Creative Commons Attribution 4.0 International}
}

@article{2009,
   title={On the algorithmic complexity of the Mastermind game with black-peg results},
   volume={109},
   ISSN={0020-0190},
   url={http://dx.doi.org/10.1016/j.ipl.2009.02.021},
   DOI={10.1016/j.ipl.2009.02.021},
   number={13},
   journal={Information Processing Letters},
   publisher={Elsevier BV},
   author={Goodrich, Michael T.},
   year={2009},
   month={Jun},
   pages={675-678}
}

@inproceedings{10.1007/978-3-642-30347-0_36,
	abstract = {Mastermind is a popular board game released in 1971, where a codemaker chooses a secret pattern of colored pegs, and a codebreaker has to guess it in several trials. After each attempt, the codebreaker gets a response from the codemaker containing some information on the number of correctly guessed pegs. The search space is thus reduced at each turn, and the game continues until the codebreaker is able to find the correct code, or runs out of trials.},
	address = {Berlin, Heidelberg},
	author = {Viglietta, Giovanni},
	booktitle = {Fun with Algorithms},
	editor = {Kranakis, Evangelos and Krizanc, Danny and Luccio, Flaminia},
	isbn = {978-3-642-30347-0},
	pages = {368--378},
	publisher = {Springer Berlin Heidelberg},
	title = {Hardness of Mastermind},
	year = {2012}}

@misc{stuckman2005mastermind,
      title={Mastermind is NP-Complete}, 
      author={Jeff Stuckman and Guo-Qiang Zhang},
      year={2005},
      eprint={cs/0512049},
      archivePrefix={arXiv},
      primaryClass={cs.CC}
}

@article{Chlebk2008,
  doi = {10.1016/j.ic.2008.07.003},
  url = {https://doi.org/10.1016/j.ic.2008.07.003},
  year = {2008},
  month = nov,
  publisher = {Elsevier {BV}},
  volume = {206},
  number = {11},
  pages = {1264--1275},
  author = {M. Chleb{\'{\i}}k and J. Chleb{\'{\i}}kov{\'{a}}},
  title = {Approximation hardness of dominating set problems in bounded degree graphs},
  journal = {Information and Computation}
}

@article{Chlebk2006,
  doi = {10.1007/s10878-006-7908-0},
  url = {https://doi.org/10.1007/s10878-006-7908-0},
  year = {2006},
  month = may,
  publisher = {Springer Science and Business Media {LLC}},
  volume = {11},
  number = {3},
  pages = {279--290},
  author = {Miroslav Chleb{\'{\i}}k and Janka Chleb{\'{\i}}kov{\'{a}}},
  title = {Approximation hardness of edge dominating set problems},
  journal = {Journal of Combinatorial Optimization}
}

@incollection{Ganzfried2012,
  doi = {10.1007/978-3-642-31866-5_24},
  url = {https://doi.org/10.1007/978-3-642-31866-5_24},
  year = {2012},
  publisher = {Springer Berlin Heidelberg},
  pages = {282--294},
  author = {Sam Ganzfried},
  title = {Computing Strong Game-Theoretic Strategies in Jotto},
  booktitle = {Lecture Notes in Computer Science}
}

@misc{wordle1,
	author={Erin Sebo},
	title={Codecracking, community and competition: why the word puzzle Wordle has become a new online obsession},
	howpublished={\url{https://theconversation.com/codecracking-community-and-competition-why-the-word-puzzle-wordle-has-become-a-new-online-obsession-174878}},
	year={2022},
	note = {Accessed: 2022-02-17}
}

@misc{wordle2,
	author={New York Times Company},
	title={Wordle Is Joining The New York Times Games},
	howpublished={\url{https://www.nytco.com/press/wordle-new-york-times-games/}},
	date={},
	year={2022},
	note = {Accessed: 2022-02-17}
}

@misc{wordle3,
	author={Daniel Victor},
	title={Wordle Is a Love Story},
	howpublished={\url{https://www.nytimes.com/2022/01/03/technology/wordle-word-game-creator.html}},
	date={},
	year={2022},
	note = {Accessed: 2022-02-17}
}

@misc{wordle4,
	author={Nicole Holliday and Ben Zimmer},
	title={Wordle’s Creator Thinks He Knows Why the Game Has Gone So Viral},
	howpublished={\url{https://slate.com/culture/2022/01/wordle-game-creator-wardle-twitter-scores-strategy-stats.html}},
	date={},
	year={2022},
	note = {Accessed: 2022-02-17}
}

@misc{wordle5,
	author={Meeta Malhotra},
	title={Wordle's Viral Success is Based on Design},
	howpublished={\url{https://thehardcopy.co/wordles-viral-success-is-based-on-design/}},
	date={January 18, 2022},
	year={2022},
	note = {Accessed: 2022-02-17}
}

@misc{Eppstein,
	author={David Eppstein},
	title={Computational Complexity of Games and Puzzles},
	howpublished={\url{https://www.ics.uci.edu/~eppstein/cgt/hard.html}},
	note={Accessed: 2022-02-17},
	year={1997}
}

@book{10.5555/1541932,
author = {Hearn, Robert A. and Demaine, Erik D.},
title = {Games, Puzzles, and Computation},
year = {2009},
isbn = {1568813228},
publisher = {A. K. Peters, Ltd.},
address = {USA},
abstract = {The authors show that there are underlying mathematical reasons for why games and puzzles are challenging (and perhaps why they are so much fun). They also show that games and puzzles can serve as powerful models of computation quite different from the usual models of automata and circuits offering a new way of thinking about computation. The appendices provide a substantial survey of all known results in the field of game complexity, serving as a reference guide for readers interested in the computational complexity of particular games, or interested in open problems about such complexities.}
}
\end{document}